\newtheorem{theorem}{Theorem}
\newtheorem{lemma}{Lemma}
\newtheorem{fact}{Fact}
\theoremstyle{definition}
\newtheorem{definition}{Definition}[section]
\newtheorem{remark}{Remark}
\newcommand{\bb}{\bm{b}}
\newcommand{\bx}{\bm{x}}
\newcommand{\by}{\bm{y}}
\newcommand{\bR}{\mathbb{R}}
\renewcommand{\epsilon}{\varepsilon}
\title{The Complexity of Symmetric Bimatrix Games \\ with Common Payoffs}
\author{Abheek Ghosh and Alexandros Hollender}
\affil{University of Oxford, UK}
\date{}
\begin{document}

\maketitle

\begin{abstract}
We study symmetric bimatrix games that also have the common-payoff property, i.e., the two players receive the same payoff at any outcome of the game. Due to the symmetry property, these games are guaranteed to have symmetric Nash equilibria, where the two players play the same (mixed) strategy. While the problem of computing such symmetric equilibria in general symmetric bimatrix games is known to be intractable, namely PPAD-complete, this result does not extend to our setting. Indeed, due to the common-payoff property, the problem lies in the lower class CLS, ruling out PPAD-hardness. In this paper, we show that the problem remains intractable, namely it is CLS-complete. On the way to proving this result, as our main technical contribution, we show that computing a Karush-Kuhn-Tucker (KKT) point of a quadratic program remains CLS-hard, even when the feasible domain is a simplex.
\end{abstract}

\section{Introduction}

The Nash equilibrium notion is an extremely natural solution concept for games. First of all, it has a very simple definition: every player should be happy with what they are playing at equilibrium, given what the other players are playing. In particular, this is something that each player can easily check for themselves. A second, arguably equally important, property is that every game is guaranteed to have a Nash equilibrium~\cite{Nash50,Nash51}. As a result, Nash equilibria have been used as the de facto standard solution concept in a vast array of economic settings.

One crucial question, which was somewhat overlooked for a long time, is whether Nash equilibria are easy to find. The advent of computer science made this question precise and formal: \emph{is there a polynomial-time algorithm for computing a Nash equilibrium of a game}? Given that, despite significant efforts, no such efficient algorithm had been found, it was natural to suspect that the problem is in fact intractable.
The theory of NP-completeness, although very successful, can unfortunately not be used for this problem. Interestingly, it is precisely the two core properties mentioned above -- efficient verifiability and guaranteed existence -- which make it essentially impossible for the problem to be NP-hard. Indeed, the problem lies in the class TFNP of total NP search problems, and no TFNP problem can be NP-hard, unless NP $=$ co-NP \cite{MegiddoP91-TFNP}. The complexity class PPAD, the correct class -- as it turned out -- to capture the complexity of computing Nash equilibria, was only defined in the early 90s~\cite{Papadimitriou94-TFNP-subclasses}, and the celebrated results of Daskalakis, Goldberg, and Papadimitriou~\cite{DaskalakisGP09-Nash}, and Chen, Deng, and Teng~\cite{ChenDT09-Nash} finally proved that the problem is PPAD-complete.

The PPAD-completeness of computing Nash equilibria provides strong evidence that the problem cannot be solved in polynomial time. Indeed, many different problems with no known efficient algorithms are known lie in this class. More recently, the hardness of PPAD has also been shown assuming various cryptographic assumptions~\cite{BitanskyPR15-Nash-crypto,ChoudhuriHKPRR19-Fiat-Shamir,JawaleKKZ21-PPAD-LWE}. This intractability result brings into question whether the Nash equilibrium notion is in fact a reasonable solution concept for these games. Indeed, the PPAD-completeness result~\cite{ChenDT09-Nash} even applies to bimatrix games, namely simple normal-form games between just two players.

In the past two decades, an important research direction has consisted in investigating whether Nash equilibria can at least be computed efficiently for special restricted classes of games. A very natural such restriction of bimatrix games are \emph{symmetric} bimatrix games. These are games in which nothing changes if we swap the labels of the two players. Formally, the payoff matrix of the second player is just the transpose of the payoff matrix of the first player. This is a very natural condition, which is in particular satisfied by many standard games such as the prisoner's dilemma, the stag hunt, or the game of chicken. Nash proved that symmetric games always admit \emph{symmetric} equilibria~\cite{Nash51}, where both players play the same mixed strategy. As a result, it is natural to only consider equilibria that have this desirable property in this setting. Unfortunately, this problem remains PPAD-complete~\cite[Theorem 2.4]{Papadimitriou07-AGT-complexity-equilibria}. Other notable negative results for bimatrix games include the PPAD-hardness of sparse games~\cite{ChenDT09-Nash}, win-lose games~\cite{ChenDT09-Nash}, and rank-3 games~\cite{Mehta18-constant-rank}. Some positive results have been obtained for restrictions such as, of course, zero-sum games~\cite{Neumann1928-zero-sum}, but also rank-1 bimatrix games~\cite{AdsulGMS11-rank-one}. Interestingly, all these results have shown either PPAD-completeness or polynomial-time solvability. This raises the question of whether there are natural classes of bimatrix games for which the problem is neither PPAD-complete, nor in P.

\paragraph{\bf Our contribution.}
In this work, we identify for the first time a natural class of bimatrix games for which the equilibrium problem has \emph{intermediate} complexity. Namely, we show that the problem of computing a Nash equilibrium in such games is CLS-complete, meaning that it is unlikely to be PPAD-complete or polynomial-time solvable.

The class of games we consider are \emph{symmetric bimatrix games with common payoffs}, meaning that the two players have the same payoff for any given outcome of the game. More formally, these are symmetric games $(A,A^\intercal)$ that also satisfy $A = A^\intercal$, and we consider symmetric equilibria in such games.\footnote{Since these games have common payoffs, a pure Nash equilibrium is guaranteed to exist and can be easily located by simply picking the maximum entry in the payoff matrix $A$. However, note that such pure equilibria will usually \emph{not} be symmetric.} These games, studied by Emmons et al.~\cite{Emmons22-symmetric-common-payoff}, can be used to model settings in multiagent reinforcement learning and cooperative AI, although mostly with a larger number of players. Since our hardness result already holds for only two players, it also extends to these settings.\footnote{Concurrent work by Tewolde et al.~\cite{TewoldeZOSC25-game-symmetries} establishes a weaker CLS-hardness result (for five players instead of two), and a stronger CLS-membership result for many players and more general symmetries.}

By using a well-known connection~\cite{McLennanT10-complexity-imitation,MurhekarM20-imitation}, our result immediately also implies the CLS-completeness of computing (not necessarily symmetric) Nash equilibria in imitation games $(A, I)$ with symmetric matrix $A$.\footnote{Here $I$ denotes the identity matrix.}

Our result is obtained by leveraging a known connection between equilibria of these games and Karush-Kuhn-Tucker (KKT) points of quadratic programs of a particular form. This connection was already observed and used by McLennan and Tourky~\cite{McLennanT10-complexity-imitation} in their work on NP-hardness of various decision problems related to Nash equilibria. In our context, it immediately yields CLS-membership for our equilibrium computation problem. Our main technical contribution is to also establish CLS-hardness of the problem. This is achieved by showing that the problem of computing a KKT point of a quadratic program \emph{with simplex constraints} is CLS-complete. Recent work of Fearnley et al.~\cite{fearnley2024complexity} had established CLS-completeness of the problem \emph{with box constraints}, and had left open the case of simplex constraints. We present a direct reduction from box constraints to simplex constraints, thus resolving this case as well.

\paragraph{\bf Continuous Local Search (CLS).}
The class CLS, introduced by Daskalakis and Papadimitriou~\cite{DaskalakisP11-CLS}, lies both in PPAD and in PLS. The class PLS captures the complexity of various problems in discrete local optimization, such as the local max-cut problem~\cite{JPY1988-PLS,Krentel1989-TSP,Schaeffer1991local,FabrikantPT2004pure}. Fearnley et al.~\cite{FearnleyGHS22-gradient} proved that CLS $=$ PPAD $\cap$ PLS, and that CLS exactly captures the complexity of finding gradient descent fixed points, i.e., KKT points of general smooth functions given by circuits. Babichenko and Rubinstein~\cite{BabichenkoR21-congestion} further showed CLS-completeness of computing mixed Nash equilibria of congestion games with many players, as well as of finding KKT points of polynomials of degree five. Recently, Fearnley et al.~\cite{fearnley2024complexity} showed that this remains the case even for polynomials of degree two, i.e., quadratic programming. Their result only holds for box constraints.

A CLS-completeness result indicates that the problem is unlikely to lie in P, since the cryptographic hardness results mentioned above for PPAD also apply to CLS. It is also believed that CLS $\neq$ PPAD. Indeed, it is known that no black-box reduction from PPAD to CLS exists~\cite{Morioka01-Mthesis-PLS}, and, so far, all collapses of classes in the context of TFNP have used black-box reductions.

\paragraph{\bf Further related work.}
Kontogiannis and Spirakis~\cite{KontogiannisS11-symmetric-bimatrix} use KKT points of an appropriately defined quadratic formulation of symmetric bimatrix games to obtain almost $1/3$-approximate Nash equilibria. It is known that computing Nash equilibria in imitation games is PPAD-complete~\cite{CodenottiS05-bimatrix-zero-one-imitation,ChenDT09-Nash}. Furthermore, Murhekar and Mehta~\cite{MurhekarM20-imitation} show that although the problem does not admit a FPTAS (Fully Polynomial Approximation Scheme), there is a PTAS. Various decision problems about the existence of Nash equilibria with certain properties in bimatrix games have been proved NP-complete, even for symmetric or imitation games~\cite{GilboaZ1989-Nash-decision,ConitzerS08-Nash,McLennanT10-complexity-imitation}.

\paragraph{\bf Outline.}
In \cref{sec:prelims}, we present some basic definitions and standard results about bimatrix games, and we introduce the KKT problem for quadratic programs. In \cref{sec:games-QPs}, we explain the connection between the two, and state our results for games. Finally, in \cref{sec:reduction}, we present our main technical result, namely the CLS-hardness of KKT points in quadratic programs with simplex constraints.
\section{Preliminaries}\label{sec:prelims}
Let $[n] = \{1, 2, \ldots, n\}$. Let $\Delta^{n-1} \subset \bR^n$ denote the $(n-1)$-dimensional simplex in $\bR^n$, i.e., $\bx \in \Delta^{n-1}$ if $x_i \ge 0$ for all $i \in [n]$ and $\sum_i x_i = 1$.

\subsection{Game Theory: Bimatrix Games}

\begin{definition}
Let $n,m \in \mathbb{N}$. A bimatrix game $(A,B)$ is given by two matrices $A = (a_{ij}) \in [0,1]^{n \times m}$ and $B = (b_{ij}) \in [0,1]^{n \times m}$, with the interpretation that player 1 (the row player) and player 2 (the column player) have $n$ and $m$ pure strategies respectively. If the row player plays its $i$th strategy, and the column player plays its $j$th strategy, then the payoff to the row player is $a_{ij}$, and the payoff to the column player is $b_{ij}$.

A bimatrix game is
\begin{itemize}
\item \emph{symmetric}, if $B = A^\intercal $ (in particular, $n=m$).
\item \emph{common-payoff}, if $B = A$.
\item an \emph{imitation} game, if $n=m$ and $B = I_n$ (the $n \times n$ identity matrix).
\end{itemize}
\end{definition}

\begin{definition}
Let $\varepsilon \geq 0$. A \emph{mixed strategy profile} $(\bx,\by) \in \Delta^{n-1} \times \Delta^{m-1}$ is an $\varepsilon$-Nash equilibrium if
\begin{align*}
\bx^\intercal A\by \geq \hat{\bx}^\intercal A\by - \varepsilon, \quad &\text{ for all } \hat{\bx} \in \Delta^{n-1},\\
\bx^\intercal B\by \geq \bx^\intercal B\hat{\by} - \varepsilon, \quad &\text{ for all } \hat{\by} \in \Delta^{m-1}.
\end{align*}
The equilibrium is \emph{symmetric}, if $n=m$ and $\bx=\by$.
\end{definition}

\begin{fact}[Nash~\cite{Nash51}]
Every bimatrix game has a Nash equilibrium. Every symmetric bimatrix game has a symmetric Nash equilibrium.
\end{fact}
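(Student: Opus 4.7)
The plan is to reduce both claims to Brouwer's fixed point theorem via the classical Nash map. Define the continuous function $f : \Delta^{n-1} \times \Delta^{m-1} \to \Delta^{n-1} \times \Delta^{m-1}$ by $f(\bx,\by) = (f^1(\bx,\by), f^2(\bx,\by))$, where for each pure row strategy $i$
\[
f^1_i(\bx,\by) = \frac{x_i + \max\{0,\, (A\by)_i - \bx^\intercal A\by\}}{1 + \sum_{k=1}^n \max\{0,\, (A\by)_k - \bx^\intercal A\by\}},
\]
and $f^2_j(\bx,\by)$ is defined symmetrically using $B$ and the column-player payoffs. Each coordinate is continuous, the denominators are always at least $1$, and a direct calculation verifies that the outputs lie in the respective simplices, so $f$ maps the compact convex set $\Delta^{n-1} \times \Delta^{m-1}$ continuously into itself. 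Brouwer's theorem then yields a fixed point $(\bx^*,\by^*)$.

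I would next verify that any fixed point is a Nash equilibrium. Setting $g_i = (A\by^*)_i - (\bx^*)^\intercal A\by^*$, the fixed-point equation rearranges to $x_i^* \sum_k \max\{0, g_k\} = \max\{0, g_i\}$ for every $i$. Since $(\bx^*)^\intercal A\by^*$ is a convex combination of the $(A\by^*)_i$ with weights $x_i^*$, there must exist some index $i^*$ with $x_{i^*}^* > 0$ and $g_{i^*} \leq 0$. Substituting into the identity forces $\sum_k \max\{0, g_k\} = 0$, hence $g_k \leq 0$ for all $k$; this is exactly the statement that no pure row deviation improves the payoff against $\by^*$. The analogous computation on the second block of coordinates gives the Nash condition for the column player.

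For the second claim, with $B = A^\intercal$ I would restrict attention to the diagonal $\{(\bx,\bx) : \bx \in \Delta^{n-1}\}$ and run Brouwer on $h : \Delta^{n-1} \to \Delta^{n-1}$ given by
\[
h_i(\bx) = \frac{x_i + \max\{0,\, (A\bx)_i - \bx^\intercal A\bx\}}{1 + \sum_{k=1}^n \max\{0,\, (A\bx)_k - \bx^\intercal A\bx\}}.
\]
A fixed point $\bx^*$ of $h$ satisfies $(A\bx^*)_i \leq (\bx^*)^\intercal A \bx^*$ for every $i$ by exactly the same manipulation as above. Because $\bx^{*\intercal} B = \bx^{*\intercal} A^\intercal = (A\bx^*)^\intercal$, the column player's payoff for deviating to a pure strategy $j$ against $\bx^*$ equals $(A\bx^*)_j$, so the very same inequality is simultaneously the Nash condition for both players at $(\bx^*,\bx^*)$, making it a symmetric equilibrium. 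The only genuinely delicate step in the whole argument is the fixed-point-to-equilibrium implication, which hinges on the observation that a weighted average cannot be strictly exceeded by every pure strategy in the support of $\bx^*$; once this is in hand, both parts of the fact follow immediately.
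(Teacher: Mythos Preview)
Your argument is correct and is essentially Nash's original proof via the improvement map and Brouwer's theorem. However, the paper does not prove this statement at all: it is recorded as a \emph{Fact} with a citation to Nash~\cite{Nash51} and is used as a black box. So there is nothing to compare on the paper's side; your write-up supplies a proof where the paper simply invokes the literature.
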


The following alternative, stronger way of defining approximation of Nash equilibria will be useful in the paper.

\begin{definition}
Let $\varepsilon \geq 0$. A mixed strategy profile $(\bx,\by) \in \Delta^{n-1} \times \Delta^{m-1}$ is an $\varepsilon$-\emph{well-supported} Nash equilibrium if for all $i \in [n]$
\[x_i > 0 \implies (A\by)_i \geq \max_j (A\by)_j - \varepsilon\]
and for all $i \in [m]$
\[y_i > 0 \implies (\bx^\intercal  B)_i \geq \max_j (\bx^\intercal  B)_j - \varepsilon.\]
\end{definition}

It is easy to see that any $\varepsilon$-\emph{well-supported} Nash equilibrium is also an $\varepsilon$-Nash equilibrium, but the converse is not necessarily true. Nevertheless, the notions are known to be computationally equivalent in the following sense.

\begin{lemma}[{Chen et al.~\cite[Lemma~3.2]{ChenDT09-Nash}}]
\label{lem:well-supported}
In a bimatrix game $(A,B)$ with $A,B \in [0,1]^{n \times m}$, given any $\varepsilon^2/8$-Nash equilibrium, we can find an $\varepsilon$-well-supported Nash equilibrium in polynomial time.
\end{lemma}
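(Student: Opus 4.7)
The plan is to start from the given $\delta$-Nash equilibrium $(\bx,\by)$ with $\delta = \varepsilon^2/8$, explicitly modify it by killing strategies whose payoffs are too far from the best response, and show that the modified profile is $\varepsilon$-well-supported. Set the threshold $\tau = \varepsilon/2$, let $v_1 = \max_i (A\by)_i$ and $v_2 = \max_j (\bx^\intercal B)_j$, and call a strategy $i$ of the row player \emph{bad} if $x_i > 0$ but $(A\by)_i < v_1 - \tau$; define bad strategies of the column player analogously. Let $S_1, S_2$ be the total probability mass on bad strategies.

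The first step is to bound $S_1$ and $S_2$. Since $\bx^\intercal A \by = \sum_i x_i (A\by)_i \leq (1-S_1) v_1 + S_1(v_1 - \tau) = v_1 - S_1 \tau$, the $\delta$-Nash property forces $S_1 \leq \delta/\tau = \varepsilon/4$, and the symmetric argument gives $S_2 \leq \varepsilon/4$. Then I define $\bx'$ by zeroing out bad coordinates and piling their combined mass $S_1$ onto a best response $i^* \in \argmax_i (A\by)_i$, and define $\by'$ analogously using $j^* \in \argmax_j (\bx^\intercal B)_j$; this is clearly computable in polynomial time.

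The key technical point is to control $\|A\by' - A\by\|_\infty$ (and $\|(\bx')^\intercal B - \bx^\intercal B\|_\infty$). The signed vector $\by' - \by$ has a single positive coordinate of mass $S_2$ at $j^*$, with the remaining $-S_2$ spread over bad coordinates. Because the entries of $A$ lie in $[0,1]$, the quantity $(A(\by' - \by))_i = A_{ij^*} S_2 - \sum_{j \text{ bad}} A_{ij} y_j$ is the difference of two numbers in $[0, S_2]$, so it is bounded in absolute value by $S_2 \leq \varepsilon/4$; this one-sided structure (as opposed to the crude $\|\by' - \by\|_1$ bound of $\varepsilon/2$) is what makes the constants come out right.

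Finally, I combine the pieces. For any $i$ in the support of $\bx'$, either $i = i^*$ with $(A\by)_i = v_1$, or $i$ is not bad, so $(A\by)_i \geq v_1 - \varepsilon/2$; applying the $\varepsilon/4$ perturbation bound gives $(A\by')_i \geq v_1 - 3\varepsilon/4$. On the other hand, the max over all $j$ of $(A\by')_j$ is at most $v_1 + \varepsilon/4$, so the payoff gap for any supported strategy is at most $\varepsilon$, as required. The same argument applied to the column player completes the proof. The only subtle step is the entrywise perturbation bound in the third paragraph; without noticing that $\by' - \by$ has its positive mass concentrated on a single coordinate, the naive $\ell_1$ estimate loses a factor of two and the inequality $\tau + S_2 \leq \varepsilon$ barely fails.
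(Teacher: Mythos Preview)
Your proof is correct. The paper does not prove this lemma at all: it simply cites \cite[Lemma~3.2]{ChenDT09-Nash} and uses it as a black box, so there is nothing to compare against in the paper itself. Your argument is essentially the standard Chen--Deng--Teng construction (remove sub-$\tau$ strategies, pile the freed mass onto a best response), and your observation that the perturbation $\by'-\by$ has all its positive mass on a single coordinate is exactly the refinement needed to get the $\varepsilon/4$ bound on $\|A(\by'-\by)\|_\infty$ rather than the naive $\varepsilon/2$.

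One small remark worth making explicit, since the paper relies on it in the footnote immediately after the lemma: your construction also preserves symmetry. In a symmetric game $(A,A^\intercal)$ with $\bx=\by$, the bad sets for the two players coincide, $S_1=S_2$, and the $\argmax$ sets agree, so choosing $i^*=j^*$ yields $\bx'=\by'$. You may want to add a sentence to that effect, since this is precisely the property the paper invokes when applying \cref{lem:well-supported} in the proof of \cref{thm:symmetric}.
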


This continues to hold in the case of \emph{symmetric} equilibria in symmetric bimatrix games.\footnote{It suffices to inspect the proof of \cite[Lemma~3.2]{ChenDT09-Nash} to see that for symmetric bimatrix games, the symmetry of a strategy profile is retained in the polynomial-time transformation.}

\subsection{Optimization: The KKT Problem for Quadratic Programs}\label{sec:prelim:optim}
\begin{definition}[Karush--Kuhn--Tucker (KKT) Conditions]
Consider the following constrained optimization problem in $\bR^n$
\begin{align}
    \min_{\bx \in \bR^n} \qquad\qquad f(\bx) \label{eq:optim:1} \\
    \text{ subject to} \qquad\qquad g_j(\bx) &\le 0, &\text{for $j \in [m]$.} \label{eq:optim:2} 
\end{align}
We assume that the function $f$ is continuously differentiable and the functions $g_j$ for $j \in [m]$ are continuously differentiable and convex. The Karush--Kuhn--Tucker (KKT) conditions are:
\begin{align}
    \nabla f(\bx) + \sum_j u_j \nabla g_j(\bx) &= 0, \label{eq:optim:kkt:1} \\
    u_j g_j(\bx) &= 0, &\text{for $j \in [m]$,} \label{eq:optim:kkt:2} \\
    u_j &\ge 0, &\text{for $j \in [m]$,} \label{eq:optim:kkt:3}
\end{align}
and $\bx$ must also satisfy the original constraints in \eqref{eq:optim:2}.
The $u_j$ variables are called the dual variables.
A point $\bx$ that satisfies the KKT conditions above is called a KKT point.
\end{definition}
A KKT point is stable with respect to gradient descent and vice-versa. In gradient descent, we move towards the direction of decreasing gradient of $f(\bx)$, i.e., towards a direction that has positive inner-product with $-\nabla f(\bx)$, unless by doing so we violate one of the constraints. The $u_j g_j(\bx) = 0$ condition in \eqref{eq:optim:kkt:2} ensures that $u_j > 0$ only if $g_j(\bx) = 0$, i.e., $u_j$ can take positive values only if the corresponding constraint is a tight one. Then, $\sum_j u_j \nabla g_j(\bx)$ corresponds to the convex cone of the gradients of the tight constraints, and the condition \eqref{eq:optim:kkt:1} ensures that $-\nabla f(\bx)$ is in this cone; therefore, by moving towards a direction that decreases $f(\bx)$, we violate a constraint.

\begin{definition}[$\varepsilon$-approximate KKT Conditions]
The $\varepsilon$-approximate KKT (or simply $\varepsilon$-KKT) conditions relax the KKT conditions in \eqref{eq:optim:kkt:1}--\eqref{eq:optim:kkt:3} (for the optimization problem in \eqref{eq:optim:1}--\eqref{eq:optim:2}):
\begin{align}
    \nabla f(\bx) + \sum_j u_j \nabla g_j(\bx) &\in [-\varepsilon, \varepsilon], \label{eq:optim:kkt:approx:1} \\
    u_j g_j(\bx) &= 0, &\text{for $j \in [m]$,} \label{eq:optim:kkt:apx:2} \\
    u_j &\ge 0,  &\text{for $j \in [m]$.} \label{eq:optim:kkt:apx:3}
\end{align}
\end{definition}

In this paper, we primarily focus on quadratic programs (QPs) with linear constraints. In such a QP, the objective function $f(\bx)$ in \eqref{eq:optim:1} is a quadratic function and the functions $g_j(\bx)$ in the constraints in \eqref{eq:optim:2} are linear functions. In particular, we look at box constraints and simplex constraints.

\begin{definition}[QP with Box Constraints]
A quadratic program with box constraints is given as
\begin{align}
    \min_{\bx}& \qquad\qquad \frac{1}{2}\bx^\intercal A \bx + \bb^\intercal \bx \label{eq:box_qp:1} \\
    \text{ subject to}& \qquad\qquad 0 \le x_i \le 1, \qquad\qquad\text{ for $i \in [n]$,} \label{eq:box_qp:2}
\end{align}
where  $A = (a_{ij}) \in \bR^{n \times n}$ is symmetric without loss of generality, $a_{ij} = a_{ji}$, and $\bb = (b_i) \in \bR^n$.
\end{definition}
The $\varepsilon$-KKT conditions for a QP with box constraints can be simplified to the following form:
\begin{definition}[$\varepsilon$-KKT Conditions for a QP with Box Constraints]
A feasible point $\bx$ satisfies the $\varepsilon$-KKT Conditions if 
\begin{align}
    \text{for } i \in [n] : \begin{cases}
        \sum_j a_{ij} x_j + b_i \ge -\varepsilon,  &\text{ if } x_i = 0, \\
        \sum_j a_{ij} x_j + b_i \in [-\varepsilon, \varepsilon], &\text{ if } 0 < x_i < 1, \\
        \sum_j a_{ij} x_j + b_i \le \varepsilon, &\text{ if } x_i = 1.
    \end{cases} \label{eq:box_kkt}
\end{align}
\end{definition}
It is known that computing an $\varepsilon$-KKT point of a QP with box constraints is CLS-complete~\cite{fearnley2024complexity}.

\begin{definition}[QP with a Simplex Constraint]
A quadratic program with simplex constraints is given as
\begin{align}
    \min_{\bx}& \qquad\qquad \frac{1}{2} \bx^\intercal A \bx + \bb^\intercal \bx \label{eq:simplex_qp:1} \\
    \text{ subject to}& \qquad\qquad\sum_i x_i = 1,  \label{eq:simplex_qp:2:normalized} \\
    & \qquad\qquad \ x_i \ge 0,  \qquad\qquad\text{ for $i \in [n]$.} \label{eq:simplex_qp:3}
\end{align}
where  $A = (a_{ij}) \in \bR^{n \times n}$ is symmetric without loss of generality, $a_{ij} = a_{ji}$, and $\bb = (b_i) \in \bR^n$.
\end{definition}
Note that a QP with a scaled version of the simplex constraint of the form
\begin{equation}\label{eq:simplex_qp:2}
    \sum_i x_i = s
\end{equation}
for any fixed $s > 0$ is equivalent to the canonical form: we can change variables from $x_i$ to $x_i' = x_i/s$ and the coefficients from $a_{ij}$ to $a_{ij}' = a_{ij} s^2$ and from $b_i$ to $b_i' = b_i s$ to recover the canonical form. We shall use this scaled version in some of our proofs for convenience.

\begin{definition}[$\varepsilon$-KKT Conditions for a QP with a Simplex Constraint]
A feasible point $\bx$ satisfies the $\varepsilon$-KKT Conditions if there exists $u \in \bR$ such that
\begin{align}
    \text{for } i \in [n] : \begin{cases}
        \sum_j a_{ij} x_j + b_i \ge u - \varepsilon &\text{ if } x_i = 0, \\
        \sum_j a_{ij} x_j + b_i \in [u - \varepsilon, u + \varepsilon] &\text{ if } x_i > 0.
    \end{cases} \label{eq:simplex_kkt}
\end{align}
$u \in \bR$ is the dual variable corresponding to the $\sum_i x_i = 1$ constraint.
\end{definition}
One of the main contributions of our paper is to show that computing an $\varepsilon$-KKT point for simplex-constrained QPs is CLS-complete through a reduction from the same computational problem for box-constrained QPs.

\section{The Complexity of Symmetric Common-Payoff Games}\label{sec:games-QPs}

The main result of this section is the following.

\begin{theorem}\label{thm:symmetric}
The problem of computing a symmetric $\varepsilon$-Nash equilibrium in a symmetric bimatrix game with common payoffs is CLS-complete.
\end{theorem}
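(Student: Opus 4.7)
The plan is to establish a tight polynomial-time equivalence between symmetric well-supported $\varepsilon$-Nash equilibria of symmetric common-payoff games and $\varepsilon$-KKT points of simplex-constrained QPs, and then combine this with \cref{lem:well-supported} and with the CLS-completeness of simplex-constrained QP-KKT (established in \cref{sec:reduction}). The conceptual core is the following observation: when $A = A^\intercal$, the $\varepsilon$-KKT conditions \eqref{eq:simplex_kkt} for $\bx$ to be a stationary point of $\min_{\bx \in \Delta^{n-1}} -\tfrac{1}{2}\bx^\intercal A\bx$, with dual variable $u$, read $-(A\bx)_i \in [u-\varepsilon, u+\varepsilon]$ when $x_i > 0$ and $-(A\bx)_i \geq u - \varepsilon$ otherwise. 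Setting $v \coloneqq -u$, this is precisely the statement that $\bx$ is a symmetric $\varepsilon$-well-supported Nash equilibrium of $(A, A^\intercal)$, with $v$ playing the role of $\max_j (A\bx)_j$.

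For CLS membership, I would start from an $\varepsilon^2/8$-approximate symmetric Nash equilibrium, apply \cref{lem:well-supported} to obtain an $\varepsilon$-well-supported symmetric equilibrium, and invoke the equivalence above to interpret it as an $\varepsilon$-KKT point of the simplex-constrained QP with symmetric matrix $A$. Since computing $\varepsilon$-KKT points of simplex QPs lies in CLS by \cref{sec:reduction}, this gives the membership direction.

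For CLS hardness, the reduction goes from $\varepsilon$-KKT of an arbitrary simplex QP $\min \tfrac{1}{2}\bx^\intercal A \bx + \bb^\intercal \bx$ into symmetric Nash in symmetric common-payoff games. On $\Delta^{n-1}$, the identity $\bb^\intercal \bx = \tfrac{1}{2} \bx^\intercal(\bb\bone^\intercal + \bone\bb^\intercal) \bx$ lets me absorb the linear term, so the objective equals $\tfrac{1}{2}\bx^\intercal \tilde{A}\bx$ where $\tilde{A} \coloneqq A + \bb\bone^\intercal + \bone\bb^\intercal$ is symmetric. Minimizing this is equivalent to maximizing $\bx^\intercal (-\tilde{A})\bx$ over the simplex. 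I then shift and rescale: for polynomially-bounded constants $\lambda > 0$ and $c \in \bR$ chosen so that $M \coloneqq \lambda(-\tilde{A}) + c\,\bone\bone^\intercal$ has entries in $[0,1]$, the map $\bx \mapsto \bx^\intercal M \bx$ is an affine transformation of $\bx \mapsto \bx^\intercal(-\tilde{A})\bx$ on $\Delta^{n-1}$, hence has identical stationary points. Since $M = M^\intercal$, this defines a symmetric common-payoff game $(M, M^\intercal)$, and the equivalence from the first paragraph shows that a symmetric $\lambda\varepsilon$-well-supported equilibrium of this game is an $\varepsilon$-KKT point of the original QP. Converting between approximate and well-supported equilibria via \cref{lem:well-supported} then completes the reduction.

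The main delicacy is approximation bookkeeping rather than any deep obstruction: the scaling $\lambda$ must be chosen inverse-polynomial in the bit-length of $A$ and $\bb$ so that $M \in [0,1]^{n \times n}$, which forces the target accuracy $\lambda\varepsilon$ to be a polynomially-small function of the input $\varepsilon$; and the detour through \cref{lem:well-supported} costs a squaring of the accuracy. Both losses are polynomial, so the two reductions combine to yield CLS-completeness of symmetric $\varepsilon$-Nash in symmetric common-payoff games, as claimed.
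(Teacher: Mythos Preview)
Your approach is essentially the paper's: establish the bidirectional correspondence between symmetric $\varepsilon$-well-supported equilibria of $(A,A)$ and $\varepsilon$-KKT points of the simplex QP with objective $-\bx^\intercal A\bx$ (this is \cref{lem:game-KKT} in the paper), then combine with \cref{lem:well-supported} and the CLS-completeness of simplex-QP KKT from \cref{sec:reduction}. Your hardness paragraph and the accuracy bookkeeping are correct and match the paper's argument; the paper simply compresses your absorb-linear-term, shift, and rescale steps into one sentence (``a standard rescaling and offset argument'').

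However, your membership paragraph has the reduction running in the wrong direction. To place symmetric $\varepsilon$-Nash in CLS you must reduce it \emph{to} the KKT problem: given the game $(A,A)$, form the simplex QP; then, given an $\varepsilon$-KKT point $\bx$ (obtainable because KKT is in CLS), your equivalence yields that $(\bx,\bx)$ is an $\varepsilon$-well-supported, hence $\varepsilon$-Nash, equilibrium. What you wrote instead---starting from an $\varepsilon^2/8$-Nash equilibrium, applying \cref{lem:well-supported}, and interpreting the result as a KKT point---is the solution-recovery step for the \emph{hardness} direction (reducing KKT to Nash), not membership. Since your equivalence is bidirectional this is easily repaired, and in fact \cref{lem:well-supported} is not needed for membership at all: well-supported trivially implies Nash.
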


\begin{remark}
Here, and everywhere else in this paper, when we talk about $\varepsilon$-Nash equilibria (or $\varepsilon$-well-supported Nash equilibria, or $\varepsilon$-KKT points), we always consider the setting where $\varepsilon$ is provided as part of the input, in standard binary representation. In particular, this means that all our hardness results apply to the setting where $\varepsilon$ is \emph{inverse-exponential} in the size of the input. Naturally, our results also apply to exact solutions,\footnote{For the games and KKT problems we consider, it can be shown using standard techniques (see, e.g.,~\cite{EtessamiY10-FIXP}) that they admit exact rational solutions with polynomial-length representations, and that the exact versions are equivalent to the inverse-exponential $\varepsilon$ versions.} namely to the case where $\varepsilon = 0$. Finally, when $\varepsilon$ is inverse-polynomial, an $\varepsilon$-equilibrium can be computed in polynomial time by using gradient descent on the optimization problem given in \cref{lem:game-KKT}. In other words, the problem admits a FPTAS and thus our hardness result for inverse-exponential $\varepsilon$ is optimal.
\end{remark}

\begin{proof}
By \cref{lem:well-supported}, it suffices to establish this result for approximate well-supported equilibria of the game. By \cref{lem:game-KKT}, stated and proved below, the problem is equivalent to computing an approximate KKT point of a quadratic program with a simplex constraint. Finally, by \cref{thm:QP}, our main technical result which is stated and proved in the next section, this problem is CLS-complete. Note that we can ensure that all the payoffs lie in $[0,1]$ by a standard rescaling and offset argument.
\end{proof}

The following lemma was proved by McLennan and Tourky~\cite{McLennanT10-complexity-imitation} for exact equilibria. Here we show that the result continues to hold if we introduce approximation.

\begin{lemma}\label{lem:game-KKT}
Let $(A,B)$ be a symmetric bimatrix game with common payoffs, i.e., $B = A^\intercal = A$. Then, a strategy profile $(\bx,\bx)$ is an $\varepsilon$-well-supported Nash equilibrium, if and only if $\bx$ is an $\varepsilon$-KKT point of
\begin{align*}
    \min_{\bx} \qquad\qquad - \bx^\intercal A &\bx \\
    \text{ subject to} \qquad\qquad\sum_i x_i &= 1,  \\
     \qquad\qquad\quad x_i &\ge 0, \qquad\qquad\text{ for $i \in [n]$.}
\end{align*}
\end{lemma}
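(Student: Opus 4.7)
The plan is to unfold both sides explicitly and match them term by term. Using $A = A^\intercal$, the objective $f(\bx) = -\bx^\intercal A\bx$ has gradient $\nabla f(\bx) = -2A\bx$, so the $\varepsilon$-KKT definition for simplex-constrained QPs from \cref{sec:prelim:optim} requires some $u \in \bR$ satisfying $-2(A\bx)_i \geq u - \varepsilon$ whenever $x_i = 0$, and $-2(A\bx)_i \in [u - \varepsilon,\, u + \varepsilon]$ whenever $x_i > 0$. Introducing $\mu := -u/2$ rewrites these in the cleaner form $(A\bx)_i \le \mu + \varepsilon/2$ for $x_i = 0$ and $(A\bx)_i \in [\mu - \varepsilon/2,\, \mu + \varepsilon/2]$ for $x_i > 0$. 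Meanwhile, since $B = A^\intercal = A$, the two players' conditions in the $\varepsilon$-well-supported NE definition at $\by = \bx$ collapse to the single requirement $x_i > 0 \implies (A\bx)_i \geq \max_j (A\bx)_j - \varepsilon$, and it is this lone condition that must be matched with the reformulated KKT pair.

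For the forward direction (KKT $\Rightarrow$ well-supported NE), I would observe that the upper bound $(A\bx)_i \leq \mu + \varepsilon/2$ holds for \emph{every} coordinate $i$, whether $x_i$ is zero or positive, so $\max_j (A\bx)_j \leq \mu + \varepsilon/2$. Combining this with the support-side lower bound $(A\bx)_i \geq \mu - \varepsilon/2$ immediately yields $(A\bx)_i \geq \max_j (A\bx)_j - \varepsilon$ whenever $x_i > 0$, which is exactly the well-supported NE condition.

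For the converse, I would set $M := \max_j (A\bx)_j$ and choose $\mu := M - \varepsilon/2$ (equivalently, pick the dual variable $u := \varepsilon - 2M$). Then $(A\bx)_i \leq M = \mu + \varepsilon/2$ holds trivially for all $i$, and for $x_i > 0$ the well-supported NE hypothesis gives $(A\bx)_i \geq M - \varepsilon = \mu - \varepsilon/2$; thus this $u$ witnesses the $\varepsilon$-KKT conditions. The argument presents no real obstacle beyond arithmetic bookkeeping: one must track the factor of $2$ appearing in $\nabla(-\bx^\intercal A \bx) = -2A\bx$, and confirm that the two half-$\varepsilon$ slacks (one bounding $\max_j (A\bx)_j$ above by $\mu + \varepsilon/2$, the other bounding $(A\bx)_i$ below by $\mu - \varepsilon/2$ on the support) combine to exactly $\varepsilon$ without loss, so that the \emph{same} parameter $\varepsilon$ serves on both sides of the equivalence.
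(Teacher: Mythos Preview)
Your proposal is correct and follows essentially the same approach as the paper: compute the gradient $-2A\bx$, rewrite the $\varepsilon$-KKT conditions with a substituted dual variable ($\mu=-u/2$, which the paper calls $v$), and match against the well-supported NE condition. The only difference is that you spell out both directions of the final equivalence explicitly, whereas the paper simply asserts ``it is not hard to see'' at that step.
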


\begin{proof}
Using the fact that the matrix $A = (a_{ij})$ is symmetric, we can rewrite
\[
- \bx^\intercal A \bx = - 2 \sum_{i < j} a_{ij} x_i x_j - \sum_{i} a_{ii} x_i^2.
\]
and thus
\[
\frac{\partial}{\partial x_i} \left( - \bx^\intercal A \bx \right) = - 2 \sum_{j} a_{ij} x_j = -2 (A\bx)_i.
\]
As a result, $\bx$ satisfies the $\varepsilon$-KKT conditions if any only if there exists $u \in \bR$ such that
\begin{align*}
    \text{for } i \in [n] : \begin{cases}
        -2 (A\bx)_i \ge u - \varepsilon &\text{ if } x_i = 0, \\
        -2 (A\bx)_i \in [u - \varepsilon, u + \varepsilon] &\text{ if } x_i > 0,
    \end{cases}
\end{align*}
or equivalently, if and only if there exists $v \in \bR$ such that
\begin{align*}
    \text{for } i \in [n] : \begin{cases}
        (A\bx)_i \le v + \varepsilon/2 &\text{ if } x_i = 0, \\
        (A\bx)_i \in [v - \varepsilon/2, v + \varepsilon/2] &\text{ if } x_i > 0.
    \end{cases}
\end{align*}
Now, it is not hard to see that such $v \in \bR$ exists if and only if
\[
(A\bx)_i \geq \max_j (A\bx)_j - \varepsilon \qquad \text{for all $i$ with $x_i > 0$.}
\]
But this is exactly the definition of $(\bx,\bx)$ being an $\varepsilon$-well-supported Nash equilibrium of the game $(A,A)$.
\end{proof}

\subsection{Consequences for Imitation Games}

The CLS-completeness for symmetric common-payoff games immediately yields the following result.

\begin{theorem}\label{thm:imitation}
The problem of computing an $\varepsilon$-Nash equilibrium in an imitation game $(A,I)$ with $A^\intercal=A$ is CLS-complete.
\end{theorem}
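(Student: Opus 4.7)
The plan is to derive this theorem from Theorem \ref{thm:symmetric} via the well-known McLennan--Tourky / Murhekar--Mehta correspondence between imitation games and symmetric games, which becomes particularly clean under the additional symmetry assumption $A = A^\intercal$. By \cref{lem:well-supported}, it suffices throughout to work with well-supported approximate equilibria, which lose only polynomial factors in $\varepsilon$.

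For CLS-hardness, I would reduce from the problem of \cref{thm:symmetric}: given a symmetric matrix $A$ (an instance of the symmetric common-payoff problem $(A,A)$), the reduced instance is the imitation game $(A,I)$ on the same matrix. The key claim is that if $(\bx,\by)$ is an (exact or well-supported) Nash equilibrium of $(A,I)$, then $(\by,\by)$ is a symmetric well-supported Nash equilibrium of $(A,A)$. In the exact case, this follows from the support chain
\[
\mathrm{supp}(\by) \;\subseteq\; \argmax_i x_i \;\subseteq\; \mathrm{supp}(\bx) \;\subseteq\; \argmax_i (A\by)_i,
\]
where the first inclusion uses player 2's best-response condition under $I$, the second is automatic (since $\argmax_i x_i$ is a subset of the support of any probability distribution $\bx$, because $\max_i x_i \geq 1/n > 0$), and the third uses player 1's best-response condition under $A$. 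Combined with $A = A^\intercal$, this gives $\mathrm{supp}(\by) \subseteq \argmax_i (A^\intercal \by)_i$ as well, i.e., $(\by,\by)$ satisfies both best-response conditions of the symmetric game $(A,A)$. In the approximate case, each inclusion becomes $\varepsilon$-approximate; the only subtlety is ensuring that indices in the approximate argmax of $\bx$ lie in its actual support, which is automatic from $\max_i x_i \geq 1/n$ whenever $\varepsilon < 1/n$ (and hence for the inverse-exponential $\varepsilon$ we care about).

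For CLS-membership, the reduction goes in the opposite direction: given an imitation game $(A,I)$ with $A = A^\intercal$, form the symmetric common-payoff game $(A,A)$, solve it in CLS by \cref{thm:symmetric} to obtain a symmetric well-supported equilibrium $\bx$, and then construct an approximate Nash equilibrium $(\bx, \by)$ of $(A,I)$ by choosing $\by$ as a suitable distribution supported on $\argmax_i x_i$ (so that player 2's best-response condition under $I$ is met) and arguing that player 1's best-response condition under $A$ is preserved using the symmetry $A = A^\intercal$ together with the well-supported structure of $\bx$.

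The main obstacle is the CLS-membership direction in the approximate regime, since $\bx$ need not be uniform on its support, so we cannot simply take $\by = \bx$, and player 1's condition on $(A\by)_i$ must be related back to the known condition on $(A\bx)_i$. This is exactly the step handled by the standard McLennan--Tourky construction, whose approximate version goes through with polynomial loss; the hardness direction, by contrast, is an immediate consequence of the support-chain argument above.
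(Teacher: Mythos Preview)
Your overall architecture matches the paper's exactly: reduce in both directions to the symmetric common-payoff problem of \cref{thm:symmetric} via the imitation/symmetric correspondence (the paper's \cref{lem:imitation}), after passing to well-supported equilibria via \cref{lem:well-supported}. Your hardness direction---the support-chain argument using $\max_i x_i \ge 1/n$---is precisely the paper's proof of the implication $(2)\Rightarrow(1)$ in \cref{lem:imitation}, and is correct as written.

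In the membership direction, however, you have the roles of the two strategies swapped, and this is what manufactures the ``main obstacle'' you describe. You propose to keep the symmetric equilibrium strategy as the \emph{row} player's strategy $\bx$ in $(A,I)$ and then adjust $\by$; this forces you to relate $(A\by)_i$ to $(A\bx)_i$, which is indeed awkward. The paper (and the standard McLennan--Tourky construction) does the opposite: it takes the symmetric well-supported equilibrium strategy as the \emph{column} player's strategy $\by$, and sets $\bx$ to be the uniform distribution on $\mathrm{supp}(\by)$. Then player~2's condition in $(A,I)$ is met exactly since $(\bx^\intercal I)_i = x_i$ is constant on $\mathrm{supp}(\by)=\mathrm{supp}(\bx)$, and player~1's condition follows immediately from $x_i>0 \Rightarrow y_i>0 \Rightarrow (A\by)_i \ge \max_j (A\by)_j - \varepsilon$, with no loss in $\varepsilon$ at all. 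So there is no obstacle once the roles are assigned correctly; your handwave to an unspecified ``approximate version with polynomial loss'' is unnecessary and, as stated, does not actually describe a working construction.
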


\begin{proof}
As before, by \cref{lem:well-supported}, it suffices to establish this result for approximate well-supported equilibria of the game. By \cref{lem:imitation}, stated and proved below, this problem is equivalent to that of finding a symmetric approximate well-supported Nash equilibrium in the symmetric game $(A,A^\intercal)$ with $A^\intercal=A$, i.e., with common payoffs. The result then follows by \cref{thm:symmetric}.
\end{proof}

The following lemma is well-known for exact equilibria, see e.g.~\cite{McLennanT10-complexity-imitation}. We present a version that also holds for approximate equilibria.

\begin{lemma}\label{lem:imitation}
Let $A$ be a square $n \times n$ matrix, and let $\varepsilon < 1/n$. Then the following are equivalent for all $\by$:
\begin{enumerate}
\item[(1)] $(\by,\by)$ is a symmetric $\varepsilon$-well-supported Nash equilibrium of the symmetric game $(A,A^\intercal)$,
\item[(2)] there exists $\bx$ such that $(\bx,\by)$ is an $\varepsilon$-well-supported Nash equilibrium of the imitation game $(A,I)$.
\end{enumerate}
\end{lemma}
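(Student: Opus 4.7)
First I would simplify both conditions so the equivalence becomes transparent. For (1), in the symmetric game $(A,A^\intercal)$ with profile $(\by,\by)$, the column-player well-supportedness reads $y_i > 0 \Rightarrow (\by^\intercal A^\intercal)_i \geq \max_j (\by^\intercal A^\intercal)_j - \varepsilon$, but $(\by^\intercal A^\intercal)_i = (A\by)_i$, so it coincides with the row-player condition. Thus (1) is the single condition
\[
y_i > 0 \;\implies\; (A\by)_i \geq \max_j (A\by)_j - \varepsilon.
\]
For (2), in the imitation game $(A,I)$ we have $(\bx^\intercal I)_i = x_i$, so the column-player condition becomes $y_i > 0 \Rightarrow x_i \geq \max_j x_j - \varepsilon$, while the row-player condition is the usual $x_i > 0 \Rightarrow (A\by)_i \geq \max_j (A\by)_j - \varepsilon$.

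\textbf{Direction $(1)\Rightarrow(2)$.} Given $\by$ satisfying (1), I would \emph{not} pick $\bx=\by$ (this can fail: $\by$ may have tiny-but-positive coordinates together with a much larger one, breaking the column-player condition). Instead I would let $S = \{i : y_i > 0\}$ and take $\bx$ to be uniform on $S$, i.e.\ $x_i = 1/|S|$ for $i \in S$ and $x_i = 0$ otherwise. Then for the column-player condition, whenever $y_i > 0$ we have $i \in S$, so $x_i = 1/|S| = \max_j x_j$, which trivially satisfies the inequality. For the row-player condition, if $x_i > 0$ then $i \in S$, so $y_i > 0$, and (1) immediately gives $(A\by)_i \geq \max_j (A\by)_j - \varepsilon$.

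\textbf{Direction $(2)\Rightarrow(1)$.} Suppose $(\bx,\by)$ is an $\varepsilon$-well-supported Nash equilibrium of $(A,I)$. Fix any $i$ with $y_i > 0$. The column-player condition yields $x_i \geq \max_j x_j - \varepsilon$. Since $\bx \in \Delta^{n-1}$, we have $\max_j x_j \geq 1/n$, and by hypothesis $\varepsilon < 1/n$, so $x_i \geq 1/n - \varepsilon > 0$. Now invoking the row-player condition at this $i$ gives $(A\by)_i \geq \max_j (A\by)_j - \varepsilon$, which is exactly the content of (1).

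\textbf{Main obstacle.} The only non-obvious step is $(1)\Rightarrow(2)$, where the naive choice $\bx=\by$ does not work; the fix of smoothing to a uniform distribution on $\mathrm{supp}(\by)$ is crucial and uses nothing about the payoffs. In the other direction the key trick is to convert the column-player's well-supportedness into strict positivity of the relevant entries of $\bx$, which is where the quantitative hypothesis $\varepsilon < 1/n$ enters.
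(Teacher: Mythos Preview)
Your proof is correct and follows essentially the same approach as the paper: for $(1)\Rightarrow(2)$ you take $\bx$ uniform on the support of $\by$, and for $(2)\Rightarrow(1)$ you use $\max_j x_j \geq 1/n$ together with $\varepsilon < 1/n$ to force $x_i>0$ whenever $y_i>0$. Your preliminary unpacking of both conditions and your remark on why $\bx=\by$ can fail are nice additions, but the argument is otherwise the same.
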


\begin{proof}
First, assume that $(\by,\by)$ is a symmetric $\varepsilon$-well-supported Nash equilibrium of the symmetric game $(A,A^\intercal)$. Then, we have that for all $i \in [n]$
\[y_i > 0 \implies (A\by)_i \geq \max_j (A\by)_j - \varepsilon.\]
Now, define strategy $\bx$ to be the uniform distribution on $S := \{i \in [n]: y_i > 0\}$. Then, we have that for all $i \in [n]$
\[x_i > 0 \implies y_i > 0 \implies (A\by)_i \geq \max_j (A\by)_j - \varepsilon\]
and
\[y_i > 0 \implies x_i = 1/|S| = \max_j x_j \implies (I\bx)_i = \max_j (I\bx)_j\]
i.e., $(\bx,\by)$ is an $\varepsilon$-well-supported Nash equilibrium of the imitation game $(A,I)$.

Conversely, assume that there exists $\bx$ such that $(\bx,\by)$ is an $\varepsilon$-well-supported Nash equilibrium of the imitation game $(A,I)$. Then, we have that for all $i \in [n]$
\[y_i > 0 \implies (I\bx)_i \geq \max_j (I\bx)_j - \varepsilon \implies x_i \geq \max_j x_j - \varepsilon \geq 1/n - \varepsilon > 0\]
where we used the fact that $\max_j x_j \geq 1/n$, since $\bx$ is a probability distribution. Thus, we obtain
\[y_i > 0 \implies x_i > 0 \implies (A\by)_i \geq \max_j (A\by)_j - \varepsilon\]
which means that $(\by,\by)$ is a symmetric $\varepsilon$-well-supported Nash equilibrium of the symmetric game $(A,A^\intercal)$.
\end{proof}
\section{CLS-hardness of the KKT Problem for QPs on the Simplex}
\label{sec:reduction}

We show that computing approximate KKT points for simplex-constrained QPs is CLS-complete through a reduction from the same computational problem for box-constrained QPs.

\begin{theorem}\label{thm:QP}
The problem of computing an $\varepsilon$-KKT point of a quadratic program with a simplex constraint and no first-order terms in the objective function
\begin{align}
    \min_{\bx}& \qquad\qquad \frac{1}{2}\bx^\intercal A\bx \\
    \text{ subject to}& \qquad\qquad\sum_i x_i = 1, \\
    & \qquad\qquad\ x_i \ge 0, \qquad\qquad\text{ for $i \in [n]$,}
\end{align}
where $A = (a_{ij}) \in \bR^{n \times n}$ is symmetric without loss of generality, $a_{ij} = a_{ji}$, is CLS-complete.
\end{theorem}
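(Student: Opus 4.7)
The plan is first to note that membership in CLS is standard: the objective is a polynomial and the simplex is a convex polytope admitting polynomial-time projection, so an $\varepsilon$-KKT point is a fixed point of projected gradient descent, which lies in CLS by \cite{FearnleyGHS22-gradient}. The real content is CLS-hardness, which I would prove via a polynomial-time reduction from the KKT problem for box-constrained QPs, known to be CLS-complete by \cite{fearnley2024complexity}. The conceptual difficulty is that a box $[0,1]^n$ has $2n$ independent one-sided constraints, whereas the simplex has only $n$ lower bounds plus one equality, so the reduction must introduce extra variables to simulate the upper bounds $x_i \le 1$, together with a penalty that effectively enforces one equality per coordinate.

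Given a box QP $\min \tfrac{1}{2}\bx^\intercal A \bx + \bb^\intercal \bx$ on $[0,1]^n$, the plan is to double the variables to $(\bm{z}, \bw) \in \Delta^{2n-1}$ with $\sum_i z_i + \sum_i w_i = 1$, and identify $x_i = n z_i$; here $w_i$ is a slack intended to satisfy $z_i + w_i = 1/n$, which forces $x_i \in [0,1]$. The objective will be
\[
F(\bm{z}, \bw) \;=\; \tfrac{n^2}{2}\bm{z}^\intercal A \bm{z} \;+\; n(\bb^\intercal \bm{z})\Bigl(\sum_j z_j + \sum_j w_j\Bigr) \;+\; \tfrac{M}{2}\sum_i (z_i + w_i)^2,
\]
where the middle term is a purely quadratic encoding of the original linear part $\bb^\intercal \bx = n\bb^\intercal \bm{z}$ using that the simplex sum equals $1$ (ensuring $F$ has no first-order terms, as the theorem requires), and $M$ is chosen to be a sufficiently large polynomial in the input size and $1/\varepsilon$ so that the last term effectively drives each $z_i + w_i$ toward $1/n$.

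The analysis will hinge on two derivative identities: $\partial F/\partial w_k = M(z_k + w_k) + n\bb^\intercal \bm{z}$ and $\partial F/\partial z_k - \partial F/\partial w_k = n(A\bx + \bb)_k$. Absorbing the common constant $n\bb^\intercal \bm{z}$ into a shifted dual $u'$, the simplex KKT condition on $w_k$ (when $w_k > 0$) pins $z_k + w_k = u'/M \pm O(\varepsilon/M)$ independently of $k$, and combined with the simplex equality this gives $z_k + w_k = 1/n + O(\varepsilon/M)$. Subtracting the two partial-derivative conditions in the sub-cases $(z_k, w_k > 0)$, $(z_k = 0 < w_k)$, and $(w_k = 0 < z_k)$ then yields respectively $(A\bx + \bb)_k \in [-O(\varepsilon/n), O(\varepsilon/n)]$, $(A\bx + \bb)_k \ge -O(\varepsilon/n)$, and $(A\bx + \bb)_k \le O(\varepsilon/n)$, matching the three cases of box KKT for $x_k \in (0,1)$, $x_k = 0$, and $x_k = 1$. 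The main obstacle is this last case: showing $z_k \approx 1/n$ when only the inequality $M z_k \ge u' - \varepsilon$ holds requires an \emph{a priori} polynomial bound on $u'$ and on $|(A\bx + \bb)_k|$, which I would obtain from the simplex bound $z_j, w_j \in [0,1]$. After a mild clipping of $\bx = n\bm{z}$ into $[0,1]^n$ (an $O(\varepsilon)$ perturbation controlled by $M$), the resulting $\bx$ will be an $\varepsilon'$-KKT point of the original box QP for an appropriately chosen $M$, completing the reduction.
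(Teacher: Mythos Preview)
Your overall strategy—reduce from box-constrained QP by doubling the variables, add a quadratic penalty to drive $z_k+w_k$ toward $1/n$, absorb the linear term via the simplex sum, and recover the box KKT conditions by subtracting the $z_k$- and $w_k$-derivatives—is sound and close in spirit to the paper's proof. But the final ``clipping'' step has a genuine gap. In the case $w_k=0<z_k$ your subtraction only yields the one-sided bound $(A\bx+\bb)_k\le 2\varepsilon/n$; you correctly flag that showing $z_k\approx 1/n$ here is the obstacle, but even once that is established, naive clipping $x'_k=\min(1,nz_k)$ can leave $x'_k$ strictly below $1$, and then the box KKT condition at an interior point demands the \emph{two-sided} bound you never proved. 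Concretely: with $n=2$, $A=0$, $\bb=(-10,-1)$, your simplex QP has an exact KKT point at $(z_1,z_2,w_1,w_2)=\bigl(\tfrac12+\tfrac{9}{M},\,\tfrac12-\tfrac{9}{M},\,0,\,0\bigr)$; clipping gives $\bx'=\bigl(1,\,1-\tfrac{18}{M}\bigr)$, and $(A\bx'+\bb)_2=-1$, so $\bx'$ fails the box $\varepsilon$-KKT test for every $\varepsilon<1$, no matter how large $M$ is. The repair is easy but must be stated: set $x'_k=1$ whenever $w_k=0$ (not merely when $nz_k\ge 1$); then only the upper bound is needed, and $|nz_k-1|=O(n^2L/M)$ keeps the rounding an $O(\varepsilon)$ perturbation for suitably large $M$.

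The paper sidesteps this subtlety with two devices absent from your construction. First, it introduces one extra slack variable $z$ that does not appear in the objective; its KKT condition then pins the dual $u$ into $[-\delta,\delta]$, so one never has to chase a floating $u'$. Second, it uses the \emph{centered} penalty $\frac{M}{2\delta}\sum_i(x_i+y_i-1)^2$ rather than your $\frac{M}{2}\sum_i(z_i+w_i)^2$. The centered penalty's gradient $\frac{M}{\delta}(x_i+y_i-1)$ is negative whenever $x_i+y_i<1$, and this sign, combined with $u\approx 0$ and the $y_i$-KKT inequality, traps the penalty contribution in a small two-sided window, delivering the missing lower bound on $(A\bx+\bb)_i$ automatically; the clipping $x_i'=\min(1,x_i)$ then works without modification. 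Your uncentered penalty is elegant in that it lets the simplex equality itself force $z_k+w_k\approx 1/n$ and saves a variable, but the price is a more delicate rounding rule and a floating dual throughout the analysis.
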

\begin{proof}
Notice that the simplex-constrained QP in the theorem statement does not have any first-order terms of the form $b_i x_i$. This is without loss of generality because $b_i x_i$ can be replaced by $b_i x_i (\sum_j x_j)$ to make all terms second-order (as $\sum_j x_j = 1$). So, the two formulations are equivalent. Similarly, the version where the matrix $A$ is not symmetric is equivalent to the version where it is symmetric, since we can just replace $A$ by $(A+A^\intercal)/2$, without changing the objective function.

Further, as discussed in \cref{sec:prelim:optim}, a scaled simplex constraint of the form $\sum_j x_j = s$ for some $s > 0$ instead of $\sum_j x_j = 1$ also keeps the problem effectively unchanged (by suitable change of variables). So, for convenience, we will use QPs that include first-order terms and have scaled simplex constraints in our reduction.

We start with an arbitrary instance of a QP with box constraints; we assume that $A = (a_{ij}) \in \bR^{n \times n}$ and $\bb = (b_i) \in \bR^n$ are the coefficients of this box-constrained QP, the same notation as \eqref{eq:box_qp:1}. We want to compute an $\varepsilon$-approximate KKT point of this QP. 
We reduce this to the problem of computing a $\delta$-approximate KKT point of a polynomially larger simplex-constrained QP and where $\delta = \Theta(\varepsilon)$.

Let $M = \max(1, \max_i (|b_i| + \sum_{j} |a_{ij}|))$ and $\delta = \varepsilon/(4 + 4nM)$. We construct the following QP with $2n+1$ variables $\bx = (x_1, \ldots, x_n)$, $\by = (y_1, \ldots, y_n)$, and $z$:
\begin{align}
    \min_{\bx, \by, z} \qquad\sum_{i < j} a_{ij} x_i x_j + \sum_{i} \frac{a_{ii}}{2} x_i^2 &+ \sum_i b_i x_i + \frac{M}{2\delta}\sum_i (x_i + y_i - 1)^2 \label{eq:qp:1} \\
    \text{ subject to} \qquad\qquad\sum_i x_i + \sum_i y_i + z &= 2n,  \label{eq:qp:2} \\
     x_i, y_i, z &\ge 0, \qquad\text{ for $i \in [n]$.} \label{eq:qp:3}
\end{align}
The coefficients $a_{ij}$ and $b_i$ are the same ones as the original box-constrained QP. The intuition for this constructed simplex-constrained QP is as follows: The $x_i$ variables will approximately correspond to the original $n$ variables of the box-constrained QP. The $(x_i + y_i - 1)^2$ term in the objective tries to push the variables towards satisfying $x_i + y_i = 1$. The $y_i \ge 0$ variable occurs only in this term and allows $x_i$ to take values $\le 1$. So, overall the $(x_i + y_i - 1)^2$ term acts as a soft $x_i \le 1$ constraint. The variable $z$ is to satisfy the simplex condition if $\sum_i x_i + \sum_i y_i < 2n$.

A $\delta$-approximate KKT point $(\bx, \by, z)$ of the above QP \eqref{eq:qp:1}--\eqref{eq:qp:3} satisfies the following conditions:
\begin{align}
    0 &\ge u - \delta, & \text{if } z = 0, \label{eq:kkt:z0} \\
    0 &\in [u - \delta, u + \delta], & \text{if } z > 0, \label{eq:kkt:z1} \\
    \frac{M}{\delta} (x_i + y_i - 1) &\ge u - \delta, & \text{if } y_i = 0, \label{eq:kkt:y0} \\
    \frac{M}{\delta} (x_i + y_i - 1) &\in [u - \delta, u + \delta], & \text{if } y_i > 0, \label{eq:kkt:y1} \\
    \sum_j a_{ij} x_j + b_i + \frac{M}{\delta} (x_i + y_i - 1) &\ge u - \delta, & \text{if } x_i = 0, \label{eq:kkt:x0} \\
    \sum_j a_{ij} x_j + b_i + \frac{M}{\delta} (x_i + y_i - 1) &\in [u - \delta, u + \delta], & \text{if } x_i > 0, \label{eq:kkt:x1}
\end{align}
for all $i \in [n]$ and where $u \in \bR$.

\begin{lemma}\label{lm:z}
$x_i + y_i < 2$ for all $i \in [n]$, which also implies $z > 0$ and $u \in [-\delta, \delta]$.
\end{lemma}
\begin{proof}
Irrespective of whether $z = 0$ or $z > 0$, from \eqref{eq:kkt:z0} and \eqref{eq:kkt:z1}, we have $u \le \delta$.
For contradiction, let $x_i + y_i \ge 2$ for some $i \in [n]$. Then either $x_i > 0$ or $y_i > 0$. If $y_i > 0$, then using \eqref{eq:kkt:y1}, we have
\[
    \frac{M}{\delta} (x_i + y_i - 1) \le u + \delta \implies \frac{M}{\delta} \le 2\delta \implies \delta^2 \ge \frac{M}{2},
\]
which is a contradiction because $\delta \le 1/4$ and $M \ge 1$ by definition. Similarly, if $x_i > 0$, then using \eqref{eq:kkt:x1}, we have
\begin{align*}
    \sum_j a_{ij} x_j + b_i + \frac{M}{\delta} (x_i + y_i - 1) &\le u + \delta \\
    \implies \sum_j (- |a_{ij}|) x_j - |b_i| + \frac{M}{\delta} &\le 2 \delta \quad\text{because $u \le \delta$, $x_i + y_i \ge 2$, and $x_j \geq 0,$}\\
    \implies -2nM + \frac{M}{\delta} &\le 2 \delta \qquad\qquad\qquad\quad\text{because $x_j \le 2n$ for all $j$,}\\
    \implies M \le 2\delta^2 + \delta 2nM &\le \frac{1}{8} + \frac{1}{2} \qquad\qquad\qquad\text{ because $\delta \le \frac{1}{4 + 4nM}$},
\end{align*}
which is a contradiction because $M \ge 1$. So, $x_i + y_i < 2$ for all $i \in [n]$, which implies that $\sum_i (x_i + y_i) < 2n \implies z > 0$ using \eqref{eq:qp:2}, which implies $u \in [-\delta, \delta]$ using \eqref{eq:kkt:z1}.
\end{proof}

From \cref{lm:z}, we know that $u \in [-\delta, \delta]$, using this to get rid of $u$ in \eqref{eq:kkt:y0}--\eqref{eq:kkt:x1}, we get the following conditions
\begin{align}
    \frac{M}{\delta} (x_i + y_i - 1) &\ge - 2\delta, & \text{if } y_i = 0, \label{eq:kkt:y0b} \\
    \frac{M}{\delta} (x_i + y_i - 1) &\in [- 2\delta, 2\delta], & \text{if } y_i > 0, \label{eq:kkt:y1b} \\
    \sum_j a_{ij} x_j + b_i + \frac{M}{\delta} (x_i + y_i - 1) &\ge -2\delta, & \text{if } x_i = 0, \label{eq:kkt:x0b} \\
    \sum_j a_{ij} x_j + b_i + \frac{M}{\delta} (x_i + y_i - 1) &\in [-2\delta, 2\delta], & \text{if } x_i > 0. \label{eq:kkt:x1b}
\end{align}

We now try to get rid of the dependency on $y_i$ in the KKT conditions \eqref{eq:kkt:y0b}--\eqref{eq:kkt:x1b}, albeit making these inequalities slightly weaker.
\begin{lemma}\label{lm:x}
For all $i \in [n]$, we have
\begin{align}
    \sum_j a_{ij} x_j + b_i &\ge -4\delta, & \text{if } x_i = 0, \label{eq:kkt:x0c} \\
    \sum_j a_{ij} x_j + b_i &\in [-4\delta, 4\delta], & \text{if } x_i \in (0,1), \label{eq:kkt:x1c} \\
    \sum_j a_{ij} x_j + b_i + \frac{M}{\delta} (x_i - 1) &\le 2\delta, & \text{if } x_i \ge 1. \label{eq:kkt:x2c}
\end{align}
\end{lemma}

\begin{proof}
We consider the following two cases:
\begin{enumerate}
    \item $x_i + y_i < 1$.
    As $x_i + y_i - 1 < 0$, we have $\frac{M}{\delta} (x_i + y_i - 1) < 0$. Plugging this in \eqref{eq:kkt:x0b} and \eqref{eq:kkt:x1b}, we have for all $x_i \ge 0$ (and $x_i < 1$ by the case assumption)
    \begin{equation*}\label{eq:kkt:x2a}
        \sum_j a_{ij} x_j + b_i + \frac{M}{\delta} (x_i + y_i - 1) \ge -2\delta \implies \sum_j a_{ij} x_j + b_i \ge -2\delta.
    \end{equation*}
    Further, from \eqref{eq:kkt:y0b} and \eqref{eq:kkt:y1b}, we know that $\frac{M}{\delta} (x_i + y_i - 1) \ge - 2\delta$ for all $y_i \ge 0$. Plugging this into \eqref{eq:kkt:x1b}, if $x_i > 0$, we have
    \begin{align*}\label{eq:kkt:x2b}
        \sum_j a_{ij} x_j + b_i + \frac{M}{\delta} (x_i + y_i - 1) &\le 2\delta \\
        \implies \sum_j a_{ij} x_j + b_i &\le 2\delta - \frac{M}{\delta} (x_i + y_i - 1) \le 4\delta.
    \end{align*}

    \item $x_i + y_i \ge 1$. 
    If $y_i = 0$, we know that $x_i \ge 1$, and \eqref{eq:kkt:x1b} reduces to
    \begin{equation*}
        \sum_j a_{ij} x_j + b_i + \frac{M}{\delta} (x_i - 1) \in [-2\delta, 2\delta].
    \end{equation*}
    If $y_i > 0$, from \eqref{eq:kkt:y1b} we have $\frac{M}{\delta} (x_i + y_i - 1) \in [-2\delta, 2\delta]$. Plugging this into \eqref{eq:kkt:x0b} and \eqref{eq:kkt:x1b}, we get
    \begin{align}
        \sum_j a_{ij} x_j + b_i &\ge -4\delta, & \text{if } x_i = 0, \label{eq:kkt:x2d} \\
    \sum_j a_{ij} x_j + b_i &\in [-4\delta, 4\delta], & \text{if } x_i > 0. \label{eq:kkt:x2e}
    \end{align}
    Further, as $y_i > 0$, from \eqref{eq:kkt:x1b} we also trivially have for $x_i \ge 1$
    \begin{align*}
        \sum_j a_{ij} x_j + b_i + \frac{M}{\delta} (x_i - 1) &= \sum_j a_{ij} x_j + b_i + \frac{M}{\delta} (x_i + y_i - 1) - \frac{M}{\delta}y_i  \\
        &\le 2\delta - \frac{M}{\delta}y_i \le 2\delta.
    \end{align*}
\end{enumerate}
\end{proof}

We propose $\bx' = (x_1', \ldots, x_n')$, where $x_i' = \min(1, x_i)$, as an $\varepsilon$-approximate KKT point of the original box-constrained QP.

If $x_i \le 1$ for all $i$, then from \cref{lm:x}, we are already at a $4\delta \le \varepsilon$ approximate KKT point. 

Let us now consider the case where $x_i > 1$ for some $i$.
As $x_i > 1$, from \cref{lm:x}, we have
\begin{align*}
    2\delta &\ge \sum_j a_{ij} x_j + b_i + \frac{M}{\delta} (x_i - 1) \ge -\sum_j |a_{ij}| x_j - |b_i| + \frac{M}{\delta} (x_i - 1) \\
    &\ge -\sum_j 2|a_{ij}|  - |b_i| + \frac{M}{\delta} (x_i - 1) \qquad \text{ as $x_j < 2$ from \cref{lm:z} }\\
    &\ge -2M + \frac{M}{\delta} (x_i - 1) \\
    \implies x_i &\le 1 + 2 \delta \left(1 + \frac{\delta}{M}\right) \le 1 + 4\delta.
\end{align*}
This implies that for all $i$, $x_i' \in [x_i - 4\delta, x_i]$. So, for all $i$, $\sum_j a_{ij} (x_j - x_j') \le \sum_j |a_{ij}| |x_j - x_j'| \le 4 \delta M $. Similarly, for all $i$, $\sum_j a_{ij} (x_j - x_j') \ge -4 \delta M$. Putting this together with \eqref{eq:kkt:x0c}--\eqref{eq:kkt:x2c} of \cref{lm:x}, we get
\begin{align*}
    \sum_j a_{ij} x_j' + b_i &\ge \sum_j a_{ij} x_j - 4\delta M  + b_i \ge -4\delta(1+M),\qquad \text{if } x_i' = x_i = 0, \\
    \sum_j a_{ij} x_j' + b_i &\in \sum_j a_{ij} x_j + [-4\delta M, 4\delta M] + b_i \in [-4\delta(1+M), 4\delta(1+M)], \\
    &\qquad\qquad\qquad\qquad\qquad\qquad\qquad\qquad\qquad\quad\ \ \text{if } x_i' = x_i \in (0,1), \\
    \sum_j a_{ij} x_j' + b_i &\le \sum_j a_{ij} x_j + 4\delta M + b_i + \frac{M}{\delta} (x_i - 1) \le 4\delta(1+M), \\
    &\qquad\qquad\qquad\qquad\qquad\qquad\qquad\qquad\qquad\quad\ \ \text{if } x_i' = 1 \Longleftrightarrow x_i \ge 1, 
\end{align*}
where on the last line we also used the fact that $\frac{M}{\delta} (x_i' - 1) = 0$ when $x_i' = 1$ and $\frac{M}{\delta} (x_i - 1) \geq 0$ when $x_i \geq 1$.
As $4\delta(1+M) \le \varepsilon$, $\bx'$ satisfies the required box-constraint KKT conditions, and we are done.
\end{proof}

\section{Open Problems}

Although the computation of equilibria in general bimatrix games is known to be PPAD-complete, there are some interesting special cases for which the complexity remains open. For example, it is known that \emph{win-lose} games, where all payoffs are 0 or 1, are hard~\cite{ChenDT09-Nash}. Separately, it is also known that \emph{sparse} bimatrix games, where the number of non-zero entries in each row and column is constant, are also hard~\cite{ChenDT09-Nash}. However, the complexity of bimatrix games that are both sparse and win-lose remains open~\cite{LiuS18-sparse-win-lose}. In a certain sense, these are the simplest bimatrix games, and understanding their complexity is an intriguing question.

\subsection*{Acknowledgments}
The authors wish to thank the reviewers for suggestions that helped improve the presentation of the paper. A.G.\ was supported by EPSRC Grant EP/X040461/1 ``Optimisation for Game Theory and Machine Learning''. A.H.\ thanks Koosha Samieefar for pointing out the connection with KKT points of quadratic programming shown in \cite{McLennanT10-complexity-imitation}.

\bibliographystyle{splncs04}
\bibliography{ref}

\end{document}